\newcommand{\ww}{\bm{w}}
\newcommand{\uu}{\bm{u}}
\newcommand{\bb}{\bm{b}}
\newcommand{\xx}{\bm{x}}
\newcommand{\yy}{\bm{y}}
\newcommand{\cc}{\bm{c}}
\newtheorem{prop}{Proposition}
\begin{document}

\title{A Generalization of Montucla's Rectangle-to-Rectangle Dissection to Higher Dimensions} 
\author{Antonio Campello\inst{1}\thanks{The work of A.C. was supported by FAPESP grant 2013/25219-5. } \and Vinay A. Vaishampayan\inst{2}} 
\institute{Institute of Mathematics, Statistics and Computer Science \\ University of Campinas \\
\email{campello@ime.unicamp.br} \\ \mbox{} \and
Department of Engineering Science and Physics \\
City University of New York, College of Staten Island \\ \email{Vinay.Vaishampayan@csi.cuny.edu}}
\maketitle

\begin{abstract}
Dissections  of polytopes are a well-studied subject by geometers as well as recreational mathematicians. A recent application in coding theory arises from the problem of parameterizing binary vectors of constant Hamming weight~\cite{tian2009coding}, \cite{sloane2009generalizations}, which is shown to be equivalent to  the problem of dissecting a tetrahedron to a brick.  An application of dissections to a problem related to the construction of analog codes arises in~\cite{campello2013projections}.

Here we consider the rectangle-to-rectangle dissection due to Montucla~\cite{frederickson2003dissections}. Montucla's dissection is first reinterpreted in terms of the Two Tile Theorem~\cite{sloane2009generalizations}. Based on this, a cube-to-brick dissection is developed in $\mathbb{R}^n$.  We present a linear time algorithm (in  $n$) that computes the dissection, i.e. determines a point in the cube given a point in a specific realization of the brick. An application of this algorithm to  a previously reported analog coding scheme~\cite{campello2013projections} is also discussed.
\\[1\baselineskip]
\textit{Keywords:} Dissections, Tilings, Lattices, Encoding, Parameterization.
\end{abstract}
\section{Introduction}
In the 18$^{\mbox{th}}$ century, the French mathematician Jean-{\'E}tienne Montucla discovered how to decompose a square into pieces and reassemble them into a rectangle. This process, called a \textit{square-to-rectangle} dissection, follows from a method described in \cite[p.222]{frederickson2003dissections}: 

``Draw (the square) $Q$ parallel to the coordinate axes and extend its basis to the right, placing lines perpendicular to it at intervals equal to the length of the base of $Q$. Draw (the rectangle) $Q'$ so that its upper left corner coincides with the upper left corner of $Q$ and its upper right corner falls on the extension of the base of $Q$. Place a cut in $Q'$ wherever a line crosses it.''

\begin{figure}[!htb]
\centering
\includegraphics[scale=0.6]{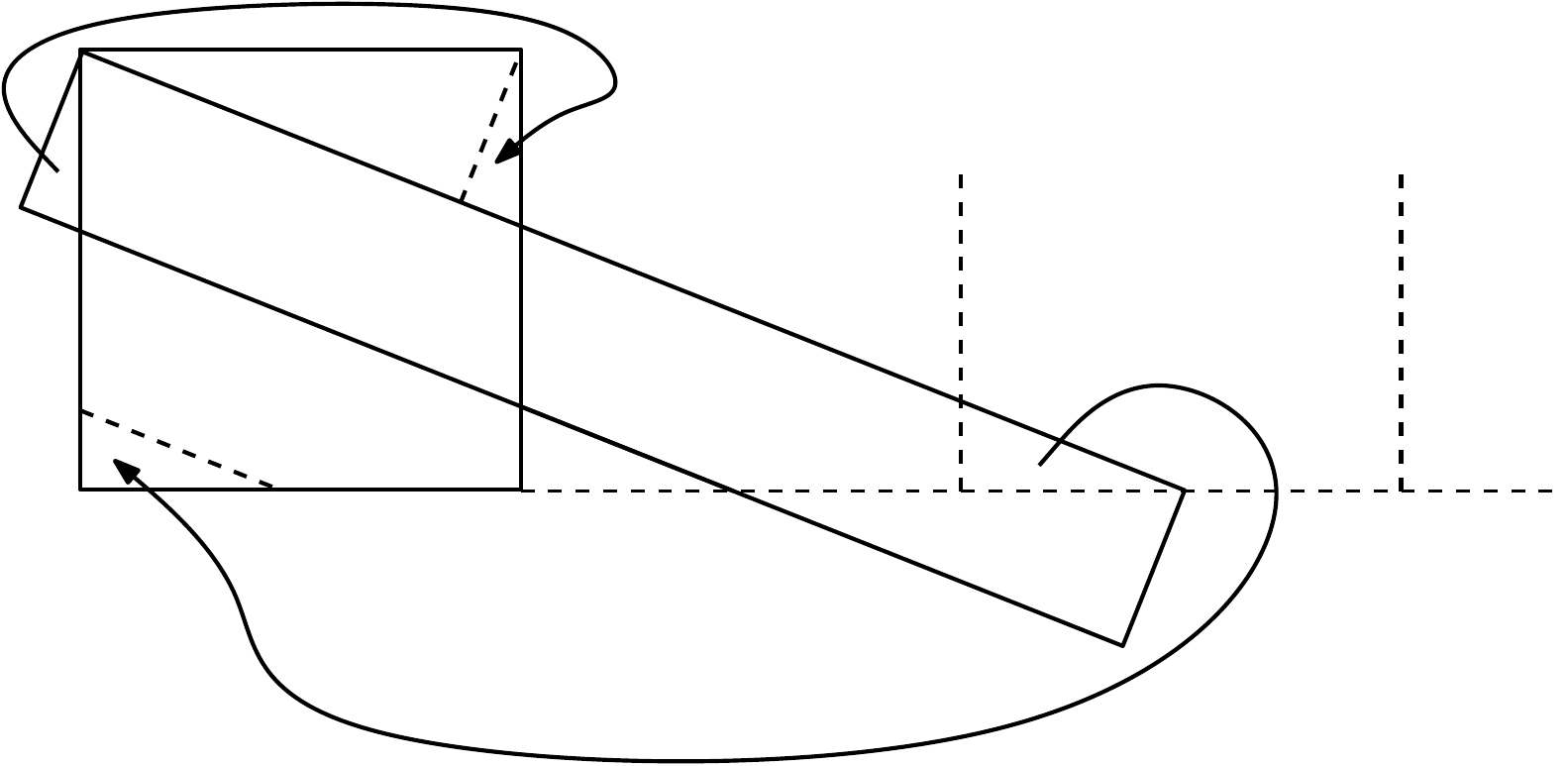}
\caption{Montucla's Dissection}
\label{fig:montucla1}
\end{figure}

We are interested in generalizing this dissection to dimensions greater than two, i.e., in dissections from the brick $[0,a_1] \times \ldots \times [0,a_n]$, $a_1a_2 \ldots a_n = 1$, into the cube $[0,1]^n$, for $n>2$. 
In this paper, we show that Montucla's dissection is a manifestation of a general dissection principle called the \textit{Two Tile Theorem} \cite{sloane2009generalizations}. The theorem states that if two polytopes tile the same space, under the action of the same isometry group, then they must be equidissectable. We generalize Montucla's dissection by finding an explicit translation group (point lattice) $\Lambda$ such that, under the action of $\Lambda$, the brick and the cube tile $\mathbb{R}^n$. 
We then design an algorithm that computes the dissection with \textit{linear} complexity $O(n)$. This follows from an explicit construction of a bidiagonal generator matrix for the group of translations used in our application of the Two Tile Theorem.

We note that the process described in \cite[p.222]{frederickson2003dissections} is for a  rectangle-to-rectangle dissection. For simplicity (and for applications) we will only consider cube-to-brick generalizations of this dissection. A brick-to-brick dissection can be obtained with linear complexity $O(n)$ by composing a cube-to-brick dissection with a brick-to-cube dissection. 

\section{Motivation and Related Works}
While dissections arise in geometry and in recreational mathematics, the connections with coding have remained largely unexplored. In~\cite{tian2009coding}, the problem of encoding binary sequences of constant Hamming weight is shown to be related to that of constructing a brick to tetrahedron dissection, and an efficient algorithm for accomplishing this is presented. The cuts in the resulting dissection are  perpendicular to the coordinate axes. The tetrahedron in~\cite{tian2009coding} belongs to the family of Hill tetrahedra. A generalization of Schobi's dissection~\cite{Schobi:1985} of a three-dimensional Hill tetrahedron  to dimensions greater than three is presented in \cite{sloane2009generalizations}. The generalization is based on an application of the Two Tile Theorem as stated in that paper.  A  tetrahedron-to-brick dissection based on a non-rigid transformation and a solution to the problem of constructing  $\mathbb{Z}^n$ to $\mathbb{Z}^n$ mappings that arise when cuts are not perpendicular to the coordinate axes is presented in~\cite{vs:2006}. 
 
In \cite{campello2013projections}, a coding scheme for transmitting a real vector drawn uniformly from the cube $[0,1]^k$ over an $n$-dimensional Gaussian channel, $k < n$, is proposed. It is shown that this can be accomplished by dissecting the cube into a suitably chosen brick, and then encoding the points in the brick. To do so, given a point in the cube, one has to efficiently compute the map that takes it into the corresponding piece in the brick. This was the main motivation for our algorithm for computing the cube-to-brick dissection, albeit the problem has a mathematical interest on its own. 

Constructions of rectangle-to-square and brick-to-cube (in $\mathbb{R}^3$) dissections can be found in \cite{frederickson2003dissections}. In a different context, techniques of dissecting a rectangle into a finite number of non-overlapping squares can be found in \cite{classicPapers}, whereas conditions for the triangularization of an $n$-dimensional cube into several simplexes are derived in \cite{simplexes}.

\section{Preliminaries}
\label{sec:preliminaries}

Let $P \subset \mathbb{R}^n$ be a polytope. A \textit{dissection} of $P$ into $m$ different polytopes is a decomposition of the form 
$$P = \bigcup_{i=1}^m P_i,$$
such that the interiors of $P_i$ and $P_j$ are disjoint for any $i \neq j$ and $m$ is finite. The elements $P_i$ are called \textit{pieces} of the dissection. Let $Q \subset \mathbb{R}^n$ be another polytope. Suppose that there is a dissection
$$Q = \bigcup_{i=1}^m Q_i,$$
such that $Q_i$ may be obtained from $P_i$ through an isometry $\phi_i$ of $\mathbb{R}^n$. We say that $Q$ and $P$ are \textit{equidissectable}. Equidissectable polytopes have the same volume, but for dimensions greater than two, equality of volume is not a sufficient condition for equidissectability. This subject, of fundamental interest to geometers, is treated in detail in \cite{Bol:1978}. 
We note that our interest lies in  the bijection implicit in the dissection, i.e., given a point $\xx \in {P}_i$, find $\yy = \phi_i(\xx) \in Q_i$. 

The Two Tile Theorem, which underlies our construction, is described next.

Let $G$ be an isometry group in $\mathbb{R}^n$, $P$ a polytope, and $\Omega \subseteq \mathbb{R}^n$. If the images of $P$ by the action of $G$ have disjoint interiors and $\Omega = \bigcup_{g \in G} gP$, we say that $P$ is a $G$-tile for $\Omega$.
\begin{theorem}[Two Tile Theorem \cite{sloane2009generalizations}] If for some set $\Omega \subseteq \mathbb{R}^n$ and some isometry group $G$ of $\mathbb{R}^n$, two polytopes $P$ and $Q$ are $G$-tiles for $\Omega$, then $P$ and $Q$ are equidissectable.
\label{thm:twotiles}
\end{theorem}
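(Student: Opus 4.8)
\medskip
\noindent\textbf{Proof plan.}
We may take $P$ and $Q$ to be full-dimensional. The plan is to superimpose the two $G$-tilings of $\Omega$ and restrict their common refinement to $P$ and to $Q$. Since $P=eP\subseteq\bigcup_{g\in G}gP=\Omega$, the copies $\{gQ:g\in G\}$ cover $P$, so
\[
P=\bigcup_{g\in G}\bigl(P\cap gQ\bigr),\qquad\text{and, dually,}\qquad Q=\bigcup_{g\in G}\bigl(Q\cap gP\bigr).
\]
First I would take the nonempty sets $P_g:=P\cap gQ$ as the pieces of a dissection of $P$, and likewise $Q_g:=Q\cap gP$ for $Q$. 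Each is an intersection of two polytopes, hence a polytope; and for $g\neq g'$ the interiors of $P_g$ and $P_{g'}$ are disjoint, since $\operatorname{int}(P_g)\subseteq\operatorname{int}(gQ)$ while the copies $gQ$, indexed by $G$, have pairwise disjoint interiors. Note that, applied to the full-dimensional $Q$, this last hypothesis also forces $g\mapsto gQ$ to be injective (were $gQ=g'Q$ with $g\neq g'$, the hypothesis would make their common interior empty), so the $G$-action on the orbit of $Q$, and symmetrically of $P$, is free. Because the displayed unions are literally equal to $P$ and $Q$, no measure-zero or boundary bookkeeping is needed.

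\smallskip
Next I would match the pieces. For each $g\in G$, the isometry $g^{-1}\in G$ satisfies
\[
g^{-1}P_g=g^{-1}(P\cap gQ)=g^{-1}P\cap Q=Q_{g^{-1}},
\]
so $\phi_g:=g^{-1}$ carries $P_g$ onto the piece $Q_{g^{-1}}$ of $Q$; and $P\cap gQ\neq\emptyset$ precisely when $g^{-1}P\cap Q\neq\emptyset$, so $g\leftrightarrow g^{-1}$ is a bijection between the nonempty pieces of $P$ and those of $Q$. Relabelling by $i=1,\dots,m$ yields dissections $P=\bigcup_{i=1}^{m}P_i$ and $Q=\bigcup_{i=1}^{m}Q_i$ with $Q_i=\phi_i(P_i)$ and $\phi_i\in G$, which is exactly the assertion that $P$ and $Q$ are equidissectable.

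\smallskip
The step I expect to be the genuine obstacle is the finiteness of these dissections, i.e.\ that only finitely many $g\in G$ have $P\cap gQ\neq\emptyset$; this is the local finiteness of the $Q$-tiling, and it is the only place where boundedness of the polytopes is used. I would argue as follows: if $gQ$ meets the bounded set $P$, then $gQ$ lies in a fixed ball $B$ of radius $\operatorname{diam}(P)+\operatorname{diam}(Q)$ about a chosen point of $P$; the copies $gQ$ contained in $B$ have pairwise disjoint interiors and common volume $\operatorname{vol}(Q)>0$, so at most $\operatorname{vol}(B)/\operatorname{vol}(Q)$ of them occur and, by freeness of the action, equally many admissible $g$. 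Hence $m<\infty$, and the symmetric bound holds for $Q$. Granting this one non-formal input, the two displays above finish the proof.
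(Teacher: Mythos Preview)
The paper does not supply its own proof of this theorem: it is stated with a citation to \cite{sloane2009generalizations} and then used as a black box throughout. So there is no in-paper argument to compare against.

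That said, your argument is the standard one and is essentially correct. Superimposing the two tilings, taking $P_g=P\cap gQ$ and $Q_g=Q\cap gP$, and matching them via $g^{-1}$ is exactly how the Two Tile Theorem is proved in the source you are implicitly reconstructing. Your treatment of finiteness via a volume/packing bound is also the right idea, and your observation that this is the only place boundedness enters is accurate. Two small points worth tightening: (i) when you say ``intersection of two polytopes, hence a polytope'' you are implicitly assuming convexity; if $P$ or $Q$ is allowed to be a non-convex polytope (as sometimes in dissection theory), $P\cap gQ$ may split into finitely many convex pieces, which is harmless but should be said; (ii) your freeness argument relies on reading the tiling hypothesis as ``$g\neq g'\Rightarrow\operatorname{int}(gQ)\cap\operatorname{int}(g'Q)=\emptyset$'' rather than ``distinct images have disjoint interiors''; either reading works for the lattice groups the paper actually uses, but it is worth flagging the interpretation you adopt.
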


Important isometry groups for our purpose are translation groups obtained from lattices. A (full-rank) \textit{lattice} $\Lambda$ is a discrete additive subgroup of $\mathbb{R}^n$ not contained in any proper subspace. Any lattice has a generator matrix, i.e., a full-rank matrix $B$ such that ${\Lambda = \left\{ \bm{u} B  : \bm{u} \in \mathbb{Z}^n \right\}}$. The vectors of $\Lambda$ may be regarded as translations, and thus $\Lambda$ acts on a polytope by translating it by its vectors, so that
$$\Omega = \bigcup_{\xx \in \Lambda} \xx P = \bigcup_{\xx \in \Lambda} (\xx + P).$$
Two $\Lambda$-tiles for $\mathbb{R}^n$ are equidissectable by Thm. \ref{thm:twotiles}.

\begin{remark}
Some simple examples of $\Lambda$-tiles are the \textit{fundamental parallelotope} $\mathcal{P}$ and the \textit{Voronoi region} $\mathcal{V}$:
$${\mathcal{P} := \left\{ \bm{u} B : 0 \leq u_i \leq 1, i = 1, \ldots n \right\}} \mbox{ and }$$ $$ \mathcal{V} := \left\{ \xx \in \mathbb{R}^n: \left\| \bm{x} \right\| \leq \left\| \bm{x} - \bm{y} \right\|, \mbox{ for all } \yy \in \Lambda \right\}.$$
Thm. \ref{thm:twotiles} then implies that a Voronoi region is equidissectable with the parallelotope $\mathcal{P}$. Since a parallelotope is equidissectable with a brick, this implies that the Voronoi region is also equidissectable with a brick. From our brick-to-cube dissection, this implies that a Voronoi region is always equidissectable with a cube. 

\end{remark}
We call the cartesian product set $\mathcal{R} = [0,a_1] \times [0,a_2] \times \ldots \times [0,a_n]$ a \textit{brick} with lengths $a_1,\ldots,a_n$. The image of $\mathcal{R}$ through an isometry is called a \textit{realization} of the brick. In general, the image of a polytope $P$ by an isometry of $\mathbb{R}^n$ is called a \textit{realization} of $P$. 

We say that an algorithm \textit{computes} the dissection if, for some realizations $\tilde{P}$ and $\tilde{Q}$ of $P$ and $Q$, respectively, and given a point $\xx \in \tilde{P}$, it outputs the point $\phi(x) \in \tilde{Q}$, under the bijection $\phi: \tilde{P} \to \tilde{Q} $ implicit in the dissection. Note that by this definition, we are free to choose the specific realizations--in fact, some realizations may be simpler than others (see Remark \ref{rmk:vanilla} for the specific case of our cube-to-brick dissection).

\section{Montucla's Dissection Revisited}
We give a formal description of the dissection in terms of the Two Tile Theorem. For simplicity, we will consider that the square and the rectangle have unit area.

Let $\Lambda = \{ (u_1 + u_2 \beta, u_2): u_1, u_2 \in \mathbb{Z}\}$ be the lattice generated by matrix

\begin{equation}
B =  \left( \begin{array}{cc} 1 & 0 \\ \beta & 1 \end{array} \right).
\label{eq:translationGroupR2}
\end{equation}
Now consider the orthogonal pair of vectors:

$$\bb_1 = \left(\frac{1}{1+\beta^2},\frac{-\beta}{1+\beta^2}\right) \mbox{ and }\bb_2 = (\beta,1).$$
Let 
$$\mathcal{R} = \left\{ \alpha_1 \bb_1 + \alpha_2 \bb_2 : 0 \leq \alpha_1 \leq 1, 0 \leq \alpha_2 \leq 1 \right\}.$$
be a realization of the rectangle with lengths $\sqrt{\beta^2+1}$ and $1/\sqrt{\beta^2+1}$. The following proposition shows a dissection between $\mathcal{R}$ and the square $\mathcal{C} = [0,1]\times[0,1]$.

\begin{prop} The square $\mathcal{C}$ and the rectangle $\mathcal{R}$ are $\Lambda$-tiles for $\mathbb{R}^2$.
\label{prop:mont}
\end{prop}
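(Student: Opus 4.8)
The plan is to verify the two defining conditions of a $\Lambda$-tile separately for $\mathcal{C}$ and for $\mathcal{R}$: (i) the translates have pairwise disjoint interiors, and (ii) the translates cover $\mathbb{R}^2$. The square $\mathcal{C} = [0,1]^2$ is the standard fundamental domain of the integer lattice $\mathbb{Z}^2$, so both conditions for $\mathcal{C}$ would follow immediately once I observe that $\Lambda = \mathbb{Z}^2$ as a set. Indeed, the columns (or rows) of $B$ are $(1,0)$ and $(\beta,1)$; since $\det B = 1$ and $\mathbb{Z}(1,0) + \mathbb{Z}(\beta,1) \ni (\beta,1) - \beta(1,0) \cdot \text{(not integer unless } \beta \in \mathbb{Z})$ — so in fact $\Lambda \neq \mathbb{Z}^2$ in general. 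The honest statement is that $\mathcal{C}$ and $\mathcal{R}$ tile under the \emph{same} group $\Lambda$, which is a shear of $\mathbb{Z}^2$. So for $\mathcal{C}$ I should argue directly: $\mathcal{C}$ is the fundamental parallelotope of the lattice $\Lambda'$ generated by $(1,0)$ and $(0,1)$, but we need it to tile under $\Lambda$. The key point is that $\mathcal{C}=[0,1]^2$ does tile $\mathbb{R}^2$ under $\Lambda = \{(u_1 + u_2\beta, u_2)\}$: given any $\xx = (x_1,x_2) \in \mathbb{R}^2$, choose $u_2 = \lfloor x_2 \rfloor$, then $x_2 - u_2 \in [0,1)$, and then choose $u_1 = \lfloor x_1 - u_2\beta \rfloor$, so that $x_1 - u_2\beta - u_1 \in [0,1)$; this exhibits $\xx$ in the translate $(u_1+u_2\beta, u_2) + \mathcal{C}$, and the construction is unique up to boundary, giving disjoint interiors. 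This is just the statement that any lattice with a matrix that is row-equivalent over $\mathbb{Z}$ to the identity — i.e. $\det = \pm 1$ with integer... no: the real content is that $B$ is upper/lower triangular with unit diagonal, so the greedy "reduce coordinate by coordinate" argument works.

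Next, for $\mathcal{R}$: I would first confirm $\mathcal{R}$ is a genuine realization of the claimed rectangle, i.e. $\bb_1 \perp \bb_2$ (clear: $\bb_1 \cdot \bb_2 = \frac{\beta}{1+\beta^2} - \frac{\beta}{1+\beta^2} = 0$) and $\|\bb_1\| = 1/\sqrt{1+\beta^2}$, $\|\bb_2\| = \sqrt{1+\beta^2}$, so the area is $1$, matching $\det B$. Then I must show $\{\xx + \mathcal{R} : \xx \in \Lambda\}$ tiles $\mathbb{R}^2$. The cleanest route is to note that $\mathcal{R}$ is the fundamental parallelotope of the lattice $\Lambda_R$ generated by $\bb_1$ and $\bb_2$, hence $\mathcal{R}$ tiles $\mathbb{R}^2$ under $\Lambda_R$; it then suffices to prove $\Lambda_R = \Lambda$. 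So the heart of the proof is the lattice identity $\mathbb{Z}\bb_1 + \mathbb{Z}\bb_2 = \mathbb{Z}(1,0) + \mathbb{Z}(\beta,1)$. For this I would express each generating set in terms of the other with an integer unimodular change of basis: check that $\bb_2 = (\beta,1) = 0\cdot(1,0) + 1\cdot(\beta,1)$ is already in $\Lambda$, and that $\bb_1 = \frac{1}{1+\beta^2}(1,-\beta)$ equals $(1,0) - \beta \cdot \frac{1}{1+\beta^2}\cdot(\text{something})$ — I'll need to solve $\bb_1 = a(1,0) + b(\beta,1)$ with $a,b \in \mathbb{Z}$, which forces $b = -\beta/(1+\beta^2)$, generally not an integer. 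Hence $\Lambda_R \subsetneq \Lambda$ — so this route also fails unless $\beta$ is special.

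Given these obstructions, the genuinely correct approach must be the direct one matching Montucla's picture: $\mathcal{R}$ does not tile under $\Lambda_R$-as-a-sublattice but the translates $\xx + \mathcal{R}$, $\xx \in \Lambda$, still cover $\mathbb{R}^2$ without interior overlap because $\mathcal{R}$ is a parallelotope spanned by vectors $\bb_1, \bb_2$ that are $\mathbb{R}$-linear combinations of the $\Lambda$-generators with the \emph{right} total "winding": specifically $\bb_2$ is a lattice vector and $\bb_1 = \frac{1}{1+\beta^2}(1,-\beta)$ satisfies $(1+\beta^2)\bb_1 = (1,0) - \beta(\beta,1) + \beta(0,0)$... the correct identity is $(1,0) = \bb_1 + \frac{\beta}{1+\beta^2}\bb_2$ and $(\beta,1) = \bb_2$, i.e. the generator matrix of $\Lambda$ in the basis $(\bb_1,\bb_2)$ is $\left(\begin{smallmatrix} 1 & \beta/(1+\beta^2) \\ 0 & 1\end{smallmatrix}\right)$, again lower/upper triangular with unit diagonal. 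So I would run the same greedy argument as for $\mathcal{C}$ but in coordinates with respect to the orthogonal frame $(\bb_1,\bb_2)$: write $\xx = t_1\bb_1/\|\bb_1\|^2 \cdot(\dots)$ — more simply, use the orthogonal projections $s_i = \langle \xx, \bb_i\rangle/\langle \bb_i,\bb_i\rangle$, and show that for each $\xx$ there is a unique (up to boundary) $u_1\bb_1 + $ lattice translate landing in $[0,1)^2$ in these coordinates, using that the $\Lambda$-to-$(\bb_1,\bb_2)$ change of basis is unitriangular. The main obstacle, then, is bookkeeping: setting up the coordinate system aligned to $\bb_1,\bb_2$, writing the action of $\Lambda$ in that system, and verifying the triangular-with-unit-diagonal structure that makes the greedy coordinate reduction terminate and be unique — after which both tiling claims reduce to the same elementary lemma about lattices generated by unitriangular matrices.
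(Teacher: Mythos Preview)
Your proposal, after the false starts you correctly discard, lands on exactly the paper's approach: for $\mathcal{C}$ you run the greedy coordinate-by-coordinate reduction using the lower-unitriangular form of $B$, and for $\mathcal{R}$ you observe that the change of basis from the $\Lambda$-generators $(1,0),(\beta,1)$ to $(\bb_1,\bb_2)$ is again unitriangular (since $(1,0)=\bb_1+\tfrac{\beta}{1+\beta^2}\bb_2$ and $(\beta,1)=\bb_2$), so the same argument applies in the rotated coordinates. This is precisely what the paper does in part~(ii), just phrased as ``write $\alpha_1\bb_1+\alpha_2\bb_2$ in the $(1,0),(\beta,1)$ basis and compare coefficients'' rather than ``write $\Lambda$ in the $(\bb_1,\bb_2)$ basis''; the two are the same linear algebra.
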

\begin{proof}
(i) $\mathcal{C}$ is a $\Lambda$-tile for $\mathbb{R}^2$.

We first prove that the translations of $\mathcal{C}$ by vectors of $\Lambda$ are interior disjoint. For let $\xx_1, \xx_2 \in \Lambda$. Suppose there is an element $\bm{v}$ in the interior of $\xx_1 + \mathcal{C}$ and $\xx_2 + \mathcal{C}$. We must have $\bm{v} = \xx_1 + \cc_1 = \xx_2 + \cc_2$ for some $\cc_1, \cc_2 \in \mbox{int}{(\mathcal{C})}$. This implies $\xx_1 - \xx_2 = \cc_1 - \cc_2 \in (-1,1) \times (-1,1)$, but since $\xx_1 - \xx_2$ are in $\Lambda$ we can write it as $\xx_1 - \xx_2 = (u_1 + u_2 x, u_2)$ for $u_1, u_2 \in \mathbb{Z}$. From this, $u_2 \in \mathbb{Z} \cap (-1,1)$, which implies $u_2 = 0$ and consequently $u_1 = 0$, i.e., $\xx_1 = \xx_2$. Hence the interiors of $\xx_1 + \mathcal{C}$ and $\xx_2 + \mathcal{C}$ are disjoint for $\xx_1 \neq \xx_2$.

Now, we must prove that $\Lambda + \mathcal{C} = \mathbb{R}^2$, i.e., that any point $\yy \in \mathbb{R}^2$ may be written as $\yy = \uu B + \bm{c}$ where $\uu \in \mathbb{Z}^2$. This is true by setting $u_2 = \left\lfloor x_2 \right\rfloor$, $c_2 = x_2 - u_2$, $u_1 = \left\lfloor x_1 - u_2 \beta\right \rfloor$ and $c_2 = x_1 - \left\lfloor x_1 - u_2 \beta \right\rfloor - u_2 \beta$.

(ii) $\mathcal{R}$ is a $\Lambda$-tile for $\mathbb{R}^2$. Let $\xx_1$ and $\xx_2$ be vectors of $\Lambda$ and suppose there is a point inside the interior of both translated rectangles $\xx_1 + \mathcal{R}$ and $\xx_2 + \mathcal{R}$. Then there are integers $u_1, u_2$ and reals $\alpha_1, \alpha_2 \in (-1,1)$ such that $\xx_1 - \xx_2 = (u_1 + u_2 x, u_2) = \alpha_1 \bb_1 + \alpha_2 \bb_2$, or:
\begin{eqnarray*}
\lefteqn{u_1(1,0)+ u_2(\beta, 1) = }  \\
&  & \alpha_1\left(\frac{1}{1+\beta^2},\frac{-\beta }{1+\beta^2}\right) + \alpha_2(\beta ,1) = \alpha_1(1,0) + \left(\alpha_2 - \frac{\alpha_1 \beta}{\beta^2+1}\right)(\beta,1).
\end{eqnarray*}
From the equation above, $u_1 = \alpha_1 \in (-1,1)$ which implies $u_1 = 0$ and subsequently $u_2 = 0$. 
Hence the interiors of translations of $\mathcal{R}$ by $\Lambda$ are disjoint. Again, one can check that any point in $\bm{y} \in \mathbb{R}^2$ can be written as $\bm{y} = \bm{r} + \bm{x}$ where $\bm{x} \in \Lambda$ and $\bm{r} \in \mathcal{R}$, proving statement (ii). 
\qed
\end{proof}

From the Two Tile Theorem, and Proposition \ref{prop:mont}, $\mathcal{R}$ and $\mathcal{C}$ are equidissectable. Up to a change of orientation, this is precisely Montucla's Dissection (see Fig. \ref{fig:tessel}).

To dissect any rectangle of sides $a, 1/a$, we choose a realization $\mathcal{R}$ of it with $\beta = \pm \sqrt{a^2 - 1}$. The pieces can be described as the intersection of the rectangle $\mathcal{R}$ with translations of the square by $\Lambda$ (or vice-versa). This way, we can count the number of pieces as the number of translation vectors $\xx \in \Lambda$ such that $([0,1]^2+\xx) \cap \mathcal{R} \neq \emptyset$ to show that only $2+\left\lceil \sqrt{a^2-1}\,\ \right\rceil \leq \left\lceil a \right\rceil + 2$ pieces are needed.

\begin{figure}
\centering
\subfloat{\includegraphics[scale=0.43]{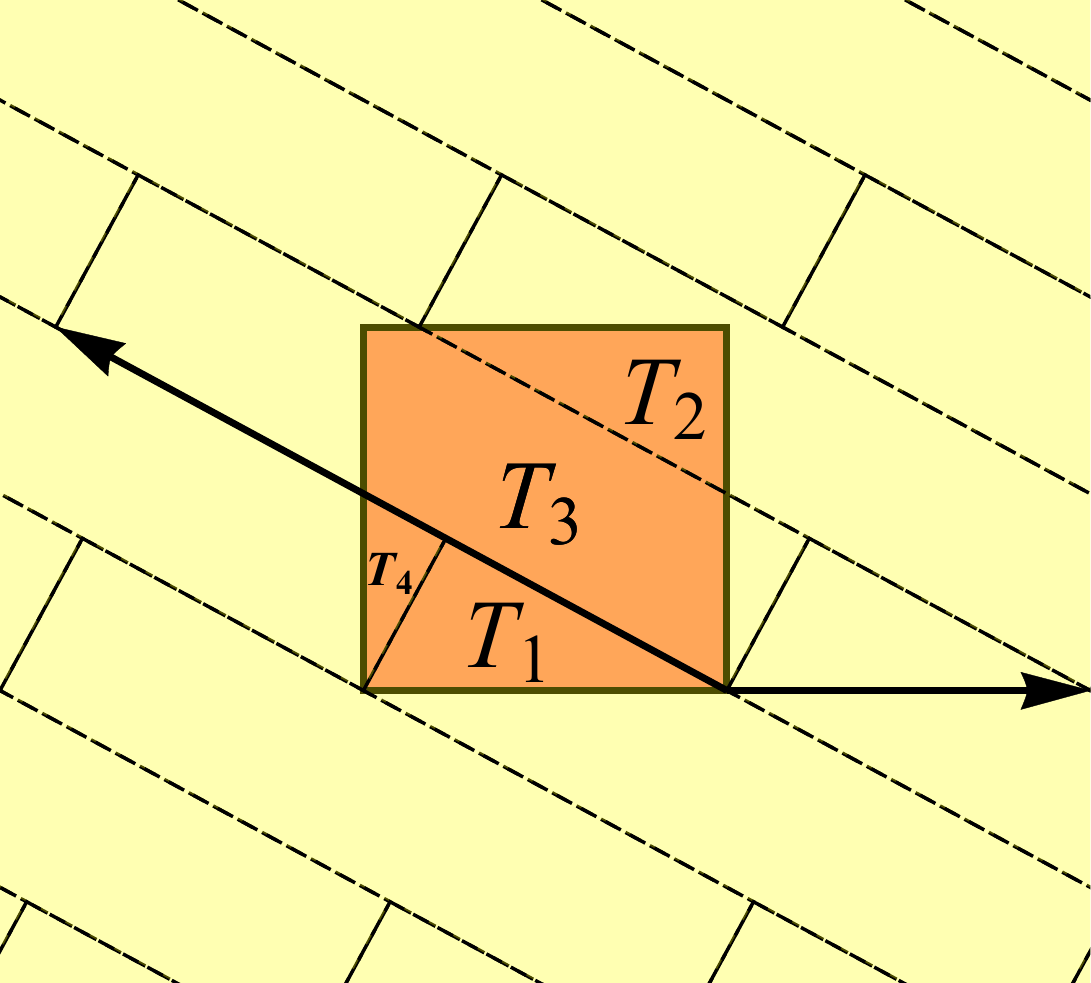}}\hspace{1cm}
\subfloat{\includegraphics[scale=0.35]{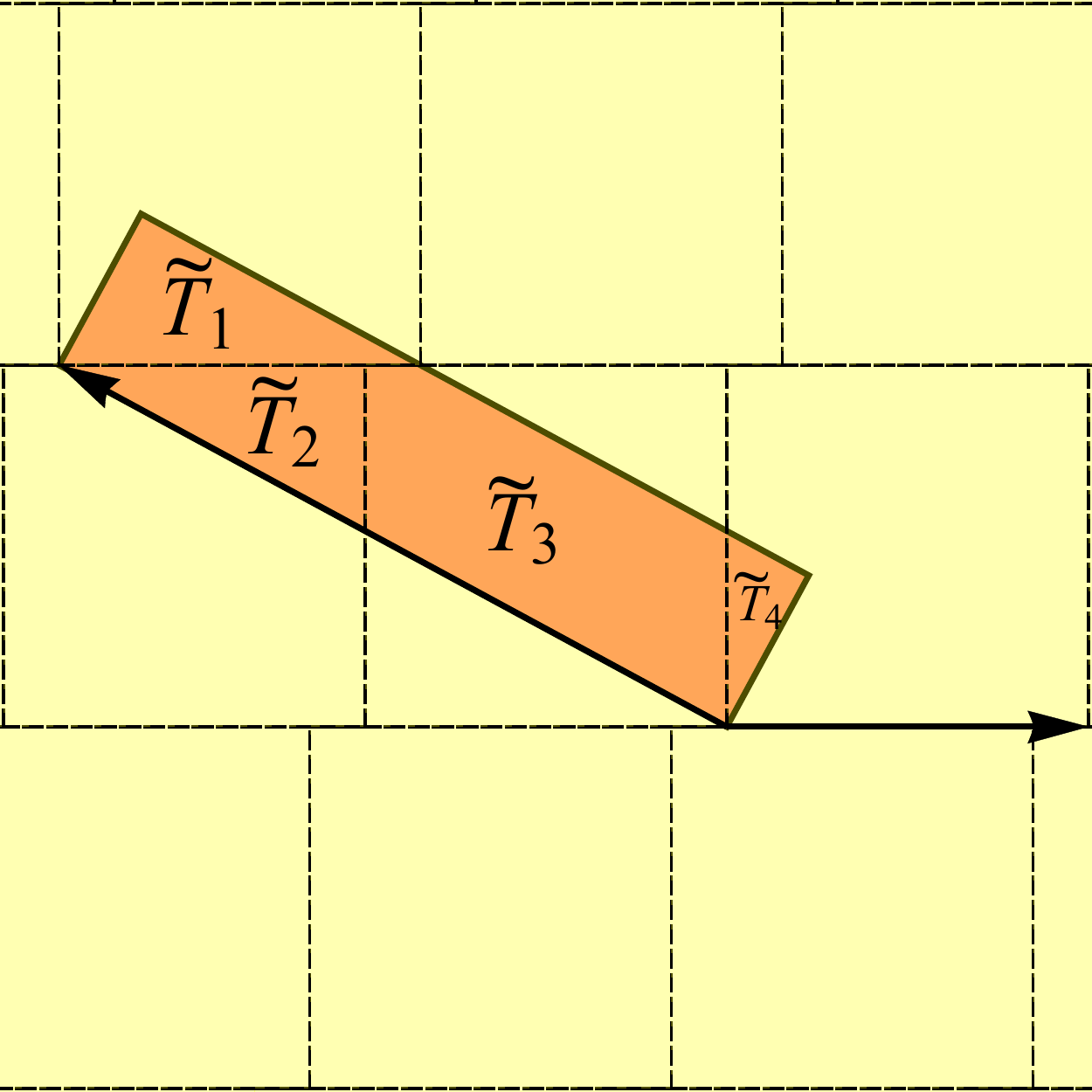}}
\caption{Montucla's Dissection in terms of the Two Tile Theorem}
\label{fig:tessel}
\end{figure}

\section{Higher Dimensions}
\begin{lemma}
Let $\Lambda \subset \mathbb{R}^n$ be the lattice with lower triangular generator matrix $B$,
\begin{equation} B = \left( \begin{array}{ccccc} 1 &  0 &  \ldots & 0 \\ b_{21} & 1 &  \ldots & 0 \\ \vdots  & \ddots & \ddots & \vdots \\
 b_{n1} & \ldots & b_{n,n-1} & 1\end{array}\right).
\label{eq:Triangular}
\end{equation}
The cube $\mathcal{C} = [0,1]^n$ is a $\Lambda$-tile for $\mathbb{R}^n$.
\label{lem:1}
\end{lemma}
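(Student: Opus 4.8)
The plan is to mimic the two-part structure already used in the proof of Proposition~\ref{prop:mont}: first show that the translates $\xx + \mathcal{C}$, $\xx \in \Lambda$, have pairwise disjoint interiors, and then show that they cover $\mathbb{R}^n$. Both parts should exploit the fact that $B$ is lower triangular with unit diagonal, so that the last coordinate of a lattice vector $\uu B$ is simply $u_n$, the second-to-last is $u_{n-1} + u_n b_{n,n-1}$, and in general the $k$-th coordinate of $\uu B$ is $u_k + \sum_{i>k} u_i b_{ik}$, i.e. $u_k$ plus a quantity depending only on $u_{k+1},\ldots,u_n$.

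For the disjointness part, suppose $\vv$ lies in the interiors of $\xx_1 + \mathcal{C}$ and $\xx_2 + \mathcal{C}$ with $\xx_1,\xx_2 \in \Lambda$. Then $\xx_1 - \xx_2 = \cc_1 - \cc_2 \in (-1,1)^n$, and writing $\xx_1 - \xx_2 = \uu B$ for some $\uu \in \mathbb{Z}^n$, I would argue by downward induction on the coordinate index. The $n$-th coordinate of $\uu B$ equals $u_n$, which must lie in $\mathbb{Z} \cap (-1,1)$, forcing $u_n = 0$. Given $u_n = u_{n-1} = \cdots = u_{k+1} = 0$, the $k$-th coordinate of $\uu B$ collapses to $u_k$, which again lies in $(-1,1)$ and hence is $0$. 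Thus $\uu = \bm{0}$ and $\xx_1 = \xx_2$.

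For the covering part, given an arbitrary $\yy \in \mathbb{R}^n$ I would construct $\uu \in \mathbb{Z}^n$ and $\cc \in \mathcal{C}$ with $\yy = \uu B + \cc$, again working from the last coordinate upward: set $u_n = \lfloor y_n \rfloor$ so that $c_n = y_n - u_n \in [0,1)$; then, having fixed $u_n,\ldots,u_{k+1}$, set $u_k = \big\lfloor y_k - \sum_{i>k} u_i b_{ik} \big\rfloor$ so that $c_k = y_k - \sum_{i>k} u_i b_{ik} - u_k \in [0,1)$. This is exactly the natural generalization of the explicit formulas given for the $n=2$ case, and it terminates because each step depends only on already-determined entries of $\uu$, thanks to the triangular structure of $B$.

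I do not expect a genuine obstacle here; the triangularity of $B$ makes both the injectivity argument (a clean descending induction) and the surjectivity argument (a sequential rounding procedure) essentially mechanical. The only point requiring a little care is bookkeeping the indices of the summation $\sum_{i>k} u_i b_{ik}$ correctly and noting that it is well-defined at each stage of the recursion; I would state the coordinate formula for $\uu B$ once at the outset and then invoke it in both halves of the proof. It may also be worth remarking explicitly that this same recursion is what yields the claimed $O(n^2)$ (or $O(n)$ in the bidiagonal special case) complexity of the dissection algorithm later on.
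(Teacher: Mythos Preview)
Your proposal is correct and matches the paper's own proof essentially line for line: the same downward induction on coordinates using the unit-diagonal lower-triangular structure of $B$ for disjointness, and the same sequential floor construction for the covering part (the paper states the latter more tersely, but your explicit formulas $u_k = \lfloor y_k - \sum_{i>k} u_i b_{ik}\rfloor$ are exactly what is meant).
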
 
\begin{proof}Let $\xx_1, \xx_2 \in \Lambda$. If there is an element in the interior of $(\bm{x}_1 + \mathcal{C}) \cap (\bm{x}_2 + \mathcal{C})$, then $\bm{x}_1 - \bm{x}_2 \in (-1,1)^n$. But $\bm{x}_1 - \bm{x}_2$ belongs to $\Lambda$, therefore there is an integer vector $\bm{u}$ such that $\bm{u} B = \bm{x}_1 - \bm{x}_2 \in (-1,1)^n$. Since $B$ is lower triangular with unitary diagonal, we have ${u}_n \in \mathbb{Z} \cap (-1,1)$ which gives ${u}_n = 0$. From this, we have ${u}_n b_{n, n-1} + {u}_{n-1} = \bm{u}_{n-1} \in \mathbb{Z} \cap (-1,1)$ which implies that ${u}_{n-1} = 0$. Continuing this process we prove that $\bm{x}_1 = \bm{x}_2$, and hence translations of $\mathcal{C}$ by $\Lambda$ are non-overlapping. Furthermore, any point $\bm{y} \in \mathbb{R}^n$ can be written as $\bm{y} = \bm{x} + \bm{u}B$ where $\bm{x} \in [0,1]^n$, and this completes the proof. \qed
\end{proof}
It is worth noting that the converse of the above lemma is also true. If $\mathcal{C}$ is a $\Lambda$-tile for $\mathbb{R}^n$, then, up to a change of coordinates, $\Lambda$ has a generator matrix in form \eqref{eq:Triangular}. This follows as a consequence of Minkowski's Conjecture that in any lattice tiling of $\mathbb{R}^n$ by cubes, there exist two cubes that share a complete face (see e.g. \cite{TwinCubes}). The conjecture was proved by H\'ajos in 1941 \cite{Hajos}.

For our main theorem, we need a slight generalization of Lemma \ref{lem:1}, that can be proved in a similar way.

\begin{lemma} Let $U$ be any upper triangular matrix with $u_{ii} = 1$ for all $i = 1, \ldots, n$. Then the parallelogram $\mathcal{R} = \left\{\bm{\alpha} U B : \alpha \in [0,1]^n\right\}$ is a $\Lambda$-tile for $\mathbb{R}^n$.
\label{lem:2}
\end{lemma}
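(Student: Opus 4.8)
The plan is to verify the two defining conditions of a $\Lambda$-tile for $\mathcal{R}$ separately, exactly as in the proof of Lemma~\ref{lem:1}: that the translates $\bm{x}+\mathcal{R}$, $\bm{x}\in\Lambda$, have pairwise disjoint interiors, and that their union is all of $\mathbb{R}^n$. The only new feature compared with Lemma~\ref{lem:1} is that $\mathcal{R}$ is the image of the cube under the composite linear map $UB$ rather than under $B$ alone; the device is to cancel the invertible factor $B$ at the right moment, leaving a statement that involves only the upper-triangular matrix $U$ and is handled by the same triangular recursion as before.

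For disjointness, suppose $\bm{x}_1,\bm{x}_2\in\Lambda$ and the interiors of $\bm{x}_1+\mathcal{R}$ and $\bm{x}_2+\mathcal{R}$ share a point. Since $UB$ is a linear isomorphism it maps $\mathrm{int}\,[0,1]^n$ onto $\mathrm{int}\,\mathcal{R}$, so the common point equals $\bm{x}_1+\bm{\alpha}_1UB=\bm{x}_2+\bm{\alpha}_2UB$ with $\bm{\alpha}_1,\bm{\alpha}_2\in(0,1)^n$, whence $\bm{x}_1-\bm{x}_2=(\bm{\alpha}_2-\bm{\alpha}_1)UB$. Writing $\bm{x}_1-\bm{x}_2=\bm{u}B$ with $\bm{u}\in\mathbb{Z}^n$ and cancelling $B$ gives $\bm{u}=\bm{\beta}U$, where $\bm{\beta}:=\bm{\alpha}_2-\bm{\alpha}_1\in(-1,1)^n$. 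Because $U$ is upper triangular with unit diagonal, the first coordinate of $\bm{\beta}U$ is just $\beta_1$, so $\beta_1=u_1\in\mathbb{Z}\cap(-1,1)=\{0\}$; the second coordinate is $\beta_2+\beta_1u_{12}=\beta_2$, forcing $\beta_2=0$; continuing through the coordinates in increasing order yields $\bm{\beta}=\bm{0}$, hence $\bm{u}=\bm{0}$ and $\bm{x}_1=\bm{x}_2$. (Note this recursion runs from the first coordinate to the last, opposite to the one in Lemma~\ref{lem:1}, precisely because $U$ is upper rather than lower triangular.)

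For covering, given $\bm{y}\in\mathbb{R}^n$ we must find $\bm{u}\in\mathbb{Z}^n$ and $\bm{\alpha}\in[0,1]^n$ with $\bm{y}=\bm{u}B+\bm{\alpha}UB=(\bm{u}+\bm{\alpha}U)B$; setting $\bm{z}:=\bm{y}B^{-1}$ it suffices to solve $\bm{z}=\bm{u}+\bm{\alpha}U$, i.e.\ to choose $\bm{u}\in\mathbb{Z}^n$ so that $(\bm{z}-\bm{u})U^{-1}\in[0,1]^n$, noting $U^{-1}$ is again upper triangular with unit diagonal. This is the greedy floor construction already used in Proposition~\ref{prop:mont}: since the $j$-th coordinate of $(\bm{z}-\bm{u})U^{-1}$ depends only on $u_1,\dots,u_j$, one fixes $u_1,\dots,u_n$ in turn, at step $j$ taking $u_j$ to be the integer part of the (already determined) quantity whose fractional part is to become $\alpha_j$, so that $\alpha_j\in[0,1)$. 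This proves covering, and with it the lemma. I do not anticipate a genuine obstacle: the argument is a routine adaptation of Lemma~\ref{lem:1} and Proposition~\ref{prop:mont}, the only points needing attention being the legitimacy of cancelling $B$ (it is invertible) and the fact that the two triangular recursions run in opposite directions. Should one prefer to avoid repeating these recursions altogether, an alternative is to note that right-multiplication by the invertible matrix $(UB)^{-1}$ is a linear bijection of $\mathbb{R}^n$ carrying the $\Lambda$-translates of $\mathcal{R}$ onto the $\Lambda'$-translates of the cube $\mathcal{C}$, where $\Lambda'$ is the lattice generated by $U^{-1}$; composing with the coordinate-reversing orthogonal map (which fixes $\mathcal{C}$) turns $U^{-1}$ into a lower-triangular matrix with unit diagonal, so Lemma~\ref{lem:1} applies directly.
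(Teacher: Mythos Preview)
Your proof is correct and is precisely the ``similar'' argument the paper alludes to: the paper does not spell out a proof of Lemma~\ref{lem:2} but only remarks that it ``can be proved in a similar way'' to Lemma~\ref{lem:1}, and your write-up carries out exactly that adaptation, cancelling the invertible factor $B$ and running the triangular recursion on $U$ (from first coordinate to last, as you correctly note). The alternative one-line reduction via the coordinate-reversing map is a nice bonus and is not in the paper.
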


A consequence of this second lemma is that the brick produced by the Gram-Schmidt orthogonalization of the rows of $B$ starting from the last row is also a $\Lambda$-tile. To see this, consider a matrix $R$ whose rows $\bm{r}_1, \ldots, \bm{r}_n$ are the orthogonalized vectors:
$$\bm{r}_n = \bm{b}_n$$
$$\bm{r}_i = \bm{b}_i - \sum_{j=i+1}^n \frac{\left\langle \bm{b}_i, \bm{r}_j \right\rangle}{\left\langle\bm{r}_j, \bm{r}_j \right\rangle} \bm{r}_j, \mbox{ for } i = n-1,\ldots, 1.$$
Let $A$ be the coefficient matrix of the orthogonalization, i.e., 
\begin{equation} a_{ij} = \displaystyle \frac{\left\langle \bm{b}_i, \bm{r}_j \right\rangle}{\left\langle\bm{r}_j, \bm{r}_j \right\rangle} \mbox{ for } j > i
\label{eq:GramSchmidt}
\end{equation}
and $a_{ii} = 1$. 
We have $B = AR$. Taking $U = A^{-1}$ in Lemma \ref{lem:2} we obtain that the brick $\mathcal{R} = \left\{ \alpha_1 \bm{r}_1 + \ldots \alpha_n \bm{r}_n : 0 \leq \alpha_i \leq 1 \right\}$ is a $\Lambda$-tile for $\mathbb{R}^n$.

We are now in position to prove the equidissectability of a brick and a cube. We first provide a simple proof, by inductively applying Montucla's dissection, and then provide a more constructive proof, that will yield our linear-time  algorithm for computing the dissection.

\begin{theorem} The brick with lengths $a_1, a_2, \ldots, a_n$ such that $a_1 a_2 \ldots a_n=1$ and the cube $\mathcal{C} = [0,1]^n$ are equidissectable.
\label{thm:bricktocube}
\end{theorem}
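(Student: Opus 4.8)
I would give two proofs. The first is a short induction on $n$ that feeds Proposition~\ref{prop:mont} into an elementary product rule; the second, which is the one that later turns into the $O(n)$ algorithm, exhibits the cube and a suitable realization of the brick as tiles of a single lattice $\Lambda$, so that equidissectability is immediate from the Two Tile Theorem (Thm.~\ref{thm:twotiles}). For the inductive proof, the two ingredients are: (a) a \emph{product rule} — if $T=\bigcup_i T_i$ and $T'=\bigcup_i\phi_i(T_i)$ are equidissectable polytopes in $\mathbb{R}^m$, then for any polytope $S$ the products $S\times T$ and $S\times T'$ are equidissectable via the pieces $S\times T_i$ and the isometries $\mathrm{id}_S\times\phi_i$; and (b) the fact that \emph{any planar rectangle is equidissectable with any rectangle of equal area}, which follows from Proposition~\ref{prop:mont} applied to the unit square together with the observation that rescaling by $c>0$ conjugates a dissection into a dissection (an isometry $\bm x\mapsto U\bm x+\bm v$ becomes $\bm x\mapsto U\bm x+c\bm v$). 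Granting (a) and (b), induct on $n$: for $n=1$ there is nothing to do; for $n\ge 2$, apply the planar dissection in the first two coordinates, fiberwise over the rest, to turn $[0,a_1]\times\cdots\times[0,a_n]$ into $[0,1]\times\bigl([0,a_1a_2]\times[0,a_3]\times\cdots\times[0,a_n]\bigr)$, whose last $n-1$ factors form an $(n-1)$-brick of volume $1$; the induction hypothesis applied fiberwise over the first coordinate finishes it.

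\textbf{Constructive proof.} Reorder the sides so that $a_1\le a_2\le\cdots\le a_n$. Since $a_1\cdots a_n=1$, every proper prefix product satisfies $a_1\cdots a_m\le 1$ for $m\le n-1$: if some $a_1\cdots a_m>1$ then $a_m>1$ (the list is sorted), hence $a_{m+1},\dots,a_n>1$ as well, forcing $a_1\cdots a_n>1$, a contradiction. Therefore the quantities $\beta_i^2:=a_i^{2}\bigl(1-(a_1\cdots a_{i-1})^{2}\bigr)$, $i=2,\dots,n$, are nonnegative; let $B$ be the lower bidiagonal matrix with $b_{ii}=1$ and $b_{i,i-1}=\beta_i$, and let $\Lambda$ be the lattice it generates. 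By Lemma~\ref{lem:1} the cube $\mathcal C=[0,1]^n$ is a $\Lambda$-tile. Let $\bm r_1,\dots,\bm r_n$ be the Gram--Schmidt orthogonalization of the rows of $B$ taken from the bottom up, with coefficients as in \eqref{eq:GramSchmidt}; by Lemma~\ref{lem:2} and the remark following it, the rectangular box $\mathcal R=\{\alpha_1\bm r_1+\cdots+\alpha_n\bm r_n:\alpha_i\in[0,1]\}$ is also a $\Lambda$-tile. The Two Tile Theorem then gives that $\mathcal R$ and $\mathcal C$ are equidissectable, so, since any brick with side lengths $a_1,\dots,a_n$ is an isometric image of $\mathcal R$, it only remains to verify that $\|\bm r_i\|=a_i$ for all $i$.

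\textbf{The side-length computation and the obstacle.} This verification is the real content, and I would keep it short by using the bidiagonal shape. Writing $\bm b_i=\beta_i\bm e_{i-1}+\bm e_i$, a downward induction shows that $\bm r_j$ is supported on coordinates $\{j-1,\dots,n\}$; consequently, for $j>i$ the inner product $\langle\bm b_i,\bm r_j\rangle$ vanishes unless $j=i+1$, and for $j=i+1$ it equals $\beta_{i+1}$. Hence $\bm r_i=\bm b_i-(\beta_{i+1}/t_{i+1})\bm r_{i+1}$ with $t_i:=\|\bm r_i\|^{2}$, and orthogonality of $\bm r_i$ and $\bm r_{i+1}$ yields the scalar recursion $t_i=\|\bm b_i\|^{2}-\beta_{i+1}^{2}/t_{i+1}$, with $t_n=\beta_n^{2}+1$ and $\|\bm b_1\|^2=1$. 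Substituting the definition of the $\beta_i$ and using $a_1\cdots a_n=1$, a downward induction (base case $t_n=\beta_n^2+1=a_n^2-(a_1\cdots a_n)^2+1=a_n^2$) gives $t_i=a_i^{2}$ for every $i$, as required. I expect this bookkeeping to be the main obstacle: the choice of the $\beta_i$ and the increasing ordering of the sides are forced exactly by the demand that the Gram--Schmidt lengths land on the prescribed $a_i$ with all $\beta_i^2\ge 0$, and pinning this down — rather than obtaining the right multiset of lengths in an uncontrolled order — is the crux. It is also this explicit sparse triangular generator $B$ that will make the resulting cube-to-brick map computable in $O(n)$ arithmetic operations.
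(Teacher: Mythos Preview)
Your proposal is correct and follows essentially the same two-proof strategy as the paper: the inductive argument is the paper's Proof~1 (you apply Montucla on the first two coordinates and recurse on the remaining $n-1$, whereas the paper recurses on the first $n$ and finishes with Montucla on the last two, a cosmetic reordering), and your constructive argument is the paper's Proof~2 with the same bidiagonal generator---your prefix-product form $\beta_i=a_i\sqrt{1-(a_1\cdots a_{i-1})^2}$ coincides, via $a_1\cdots a_n=1$, with the paper's suffix-product expression for the subdiagonal entries. Your scalar recursion $t_i=\|\bm b_i\|^2-\beta_{i+1}^2/t_{i+1}$ is a slightly cleaner packaging of the paper's Gram--Schmidt length computation, but the content is the same.
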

\textit{Proof 1.} By induction on the dimension. For $n = 2$, from the last section, it is true that the unit area rectangle and the square are equidissectable (this means that any two unit area rectangles are also equidissectable). If the theorem is valid for $n$, then any two $n$-dimensional bricks with same area are equidissectable. Let $a_1, a_2, \ldots, a_n, a_{n+1}$ be the lengths of a brick. By the induction hypothesis, we can dissect the brick $[0, a_1] \times [0, a_2] \times \ldots \times [0,a_n] \times [0,a_{n+1}]$ into the brick $[0,1] \times [0,1] \times .. [0,1] \times \ldots [0, a_1 a_2 \ldots a_n] \times [0, a_{n+1}]$ by cutting along the first coordinates. Now, using Montucla's dissection, cut along the last two coordinates and rearrange it into the unit cube.
\\ \qed
\noindent \textit{Proof 2.} We show an explicit dissection with one step of the Two Tile Theorem. Suppose without loss of generality that $a_1 \leq a_2 \leq ... \leq a_n$, and so $\prod_{k=j} a_k \geq 1$ for all $1\leq j \leq n$. Let $B$ be the matrix given by
\begin{equation}b_{ij} = \left\{ \begin{array}{cc} 1 & \mbox{ if } i = j \\ \displaystyle \frac{\sqrt{ \prod_{k=j}^n a_k^2 -1}}{\prod_{k=j+1}^n a_k} & \mbox{ if }i = j+1 \\ 0 & \mbox{ otherwise}
\end{array}\right.
\label{eq:tridiagonalMatrix}
\end{equation}
For example, for $n = 4$:

\begin{equation}B = \left(
\begin{array}{cccc}
 1 & 0 & 0 & 0 \\
 \frac{\sqrt{a_2^2 a_3^2 a_4^2-1}}{a_3 a_4} & 1 & 0 & 0 \\
 0 & \frac{\sqrt{a_3^2 a_4^2-1}}{a_4} & 1 & 0 \\
 0 & 0 & \sqrt{a_4^2-1} & 1 \\
\end{array}
\right)
\label{eq:GoodMatrix}
\end{equation}
Applying Gram-Schmidt orthogonalization on the rows of $B$ starting from the last row, we obtain a matrix $R$ whose rows are orthogonal vectors. In what follows, we prove that $\left\| \bm{r}_j \right\| = a_j$ for all $j = 1, \ldots, n$. In fact, for $j = n$ this is true since $\bm{r}_n = \bm{b}_n$. Suppose it is true for $n,n-1, \ldots j+1$.  Then, due to the structure of matrix $B$, we have $\bm{b}_j \bm{r}_i = 0$ for $i > j+1$, and 

$$\left\|\bm{r}_{j}\right\|^2 = \left\|\bm{b}_j -\frac{\left\langle\bm{b}_j,\bm{r}_{j+1}\right\rangle}{\left\langle \bm{r}_{j+1},\bm{r}_{j+1}\right\rangle} \bm{r}_{j+1}\right\|^2 = \frac{ \prod_{k=j}^n a_k^2 -1}{\prod_{k=j+1}^n a_k^2} + \left\|\bm{e}_{j+1} - \frac{\left\langle \bm{e}_{j+1},\bm{r}_{j+1} \right\rangle}{\left\langle \bm{r}_{j+1} \bm{r}_{j+1}\right\rangle} \bm{r}_{j+1} \right\|^2$$
where $\bm{e}_j$ is the canonical vector with one in the $j$-th position. After some simplifications:

$$\left\|\bm{e}_{j+1} - \frac{\left\langle \bm{e}_{j+1},\bm{r}_{j+1} \right\rangle}{\left\langle \bm{r}_{j+1} \bm{r}_{j+1}\right\rangle} \bm{r}_{j+1} \right\| = \frac{1}{a_{j+1} a_{j+2} \ldots a_n} \mbox{, and therefore } \left\| \bm{r}_j \right\| = a_j.$$
 This proves that $\mathcal{R} = \left\{ \alpha_1 \bm{r}_1 + \ldots \alpha_n \bm{r}_n : 0 \leq \alpha_i \leq 1 \right\}$ is a realization of the brick with lenghts ${a_1}, \cdots, a_n$. Then, from Lemma 1, the cube $[0,1]^n$ is a $\Lambda$-tile, from Lemma 2, the rectangle $\mathcal{R}$ is a $\Lambda$-tile and from what have been proven $\mathcal{R}$ is a realization of the brick. By invoking the Two Tile theorem, the proof is concluded.
\\ \qed

Let $A$ be the matrix that captures the Gram-Schmidt coefficient of the orthogonalization of $B$ (Eq. \eqref{eq:tridiagonalMatrix}), as in \eqref{eq:GramSchmidt}. The central observation for our linear-time algorithm to compute the cube-to-brick dissection is the fact that both $B$ and $A$ are bidiagonal (i.e., all elements are zero except the main diagonal and either the diagonal above or the diagonal below). In fact, a quick induction shows a closed form expression for $A$:

\begin{equation}a_{ij} = \left\{ \begin{array}{cc} 1 & \mbox{ if } i = j \\ \displaystyle \frac{\sqrt{ \prod_{k=j}^n a_k^2 -1}}{a_{j}^2 \prod_{k=j+1}^n a_k } & \mbox{ if }j = i+1 \\ 0 & \mbox{ otherwise.}\end{array} \right.
\label{eq:GramSchmidtMatrix}
\end{equation}

\begin{algorithm}{Compute the Dissection}
\caption{Compute the Dissection}
\begin{algorithmic}[1]
\Procedure{Cube-to-Brick}{$\bm{x} \in \mathcal{C}, a_1,\ldots,a_n$}
\State Find $\bm{z}$ such that $\bm{z} B = \bm{x}$
\State $\bm{\overline{x}} \gets \bm{z}A$
\State $u_1 \gets \left\lfloor \overline{x}_1 \right\rfloor$
\For {$i:2,\ldots,n$}
\State $u_i = \left\lfloor \overline{x}_j - u_j a_{j,j+1}  \right\rfloor$ 
\EndFor 
\State \textbf{return} $\bm{y} = \bm{x} - \bm{u} B$.
\EndProcedure
\end{algorithmic}
\label{alg:algorithm}
\end{algorithm}

\begin{theorem}
There is an algorithm that computes the dissection with $O(n)$ arithmetic operations.
\label{thm:linear}
\end{theorem}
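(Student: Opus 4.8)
The plan is to verify that Algorithm~\ref{alg:algorithm} returns the point $\phi(\xx)$ prescribed by the dissection and that it uses $O(n)$ arithmetic operations. For correctness I would first make $\phi$ explicit. Since $\mathcal{C}=[0,1]^n$ and $\mathcal{R}$ are both $\Lambda$-tiles (Lemmas~\ref{lem:1} and~\ref{lem:2}, combined as in Proof~2 of Theorem~\ref{thm:bricktocube}), the pieces of $\mathcal{C}$ are the sets $\mathcal{C}\cap(\bm{v}+\mathcal{R})$ for $\bm{v}\in\Lambda$, with pairwise disjoint interiors, and $\phi$ carries $\mathcal{C}\cap(\bm{v}+\mathcal{R})$ onto $(\mathcal{C}-\bm{v})\cap\mathcal{R}$ by the translation $\uu\mapsto\uu-\bm{v}$. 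Hence, for $\xx\in\mathcal{C}$, computing $\phi(\xx)$ reduces to finding the unique $\bm{u}\in\mathbb{Z}^n$ with $\xx-\bm{u}B\in\mathcal{R}$ (uniqueness up to the boundary, which I break by using the half-open brick $\{\bm{\alpha}R:\bm{\alpha}\in[0,1)^n\}$, with $R$ the Gram--Schmidt matrix so that $B=AR$), and then outputting $\xx-\bm{u}B$, which is line~8.

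Next I would perform the change of coordinates that turns this search into the recursion of lines~2--7. With $\bm{z}=\xx B^{-1}$ (line~2) and $\overline{\xx}=\bm{z}A$ (line~3), we get $\xx-\bm{u}B=(\bm{z}-\bm{u})B=(\bm{z}-\bm{u})AR$, so $\xx-\bm{u}B\in\{\bm{\alpha}R:\bm{\alpha}\in[0,1)^n\}$ iff $\overline{\xx}-\bm{u}A\in[0,1)^n$. Because $A$ is upper bidiagonal with unit diagonal, $(\bm{u}A)_j=u_j+u_{j-1}a_{j-1,j}$, so the $n$ scalar constraints $0\le\overline{x}_j-u_j-u_{j-1}a_{j-1,j}<1$ decouple from top to bottom into $u_1=\lfloor\overline{x}_1\rfloor$ and $u_j=\lfloor\overline{x}_j-u_{j-1}a_{j-1,j}\rfloor$, which is exactly lines~4--7 (the subscript in line~6 should be read as $u_{i-1}a_{i-1,i}$). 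This shows both that the recursion yields the required $\bm{u}$ and that such a $\bm{u}$ exists and is unique, recovering algorithmically the fact that $\mathcal{R}$ is a $\Lambda$-tile.

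Finally, for the complexity I would note that every step is linear: the off-diagonal entries $b_{j+1,j}$ and $a_{j,j+1}$ are given in closed form by \eqref{eq:tridiagonalMatrix} and \eqref{eq:GramSchmidtMatrix} and can all be produced in $O(n)$ operations by maintaining the partial products $\prod_{k=j}^{n}a_k$; solving $\bm{z}B=\xx$ is a triangular solve for a bidiagonal, unit-diagonal system ($O(n)$); forming $\overline{\xx}=\bm{z}A$ and, at the end, $\bm{u}B$ are bidiagonal matrix--vector products ($O(n)$ each); the loop of lines~5--7 is $n$ iterations of constant cost; and the subtraction $\xx-\bm{u}B$ is $O(n)$. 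Summing gives $O(n)$ arithmetic operations, which is the claim. The main obstacle is the correctness argument---identifying $\phi$ through the two tilings and checking that the factorization $B=AR$ converts membership in the realized brick $\mathcal{R}$ into the decoupled scalar flooring recursion of lines~4--7; once that identity is in place, the $O(n)$ bound is immediate from the bidiagonal structure of $B$ and $A$.
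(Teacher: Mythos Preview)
Your proposal is correct and follows essentially the same approach as the paper's proof: identify $\phi(\xx)$ as $\xx-\uu B$ for the unique $\uu\in\mathbb{Z}^n$ with $\xx-\uu B\in\mathcal{R}$, use $B=AR$ to rewrite this as $\overline{\xx}-\uu A\in[0,1)^n$, and then exploit the bidiagonal structure of $A$ and $B$ for both the flooring recursion and the $O(n)$ complexity count. If anything, your version is more explicit than the paper's in spelling out why the recursion $u_j=\lfloor\overline{x}_j-u_{j-1}a_{j-1,j}\rfloor$ actually forces $\overline{\xx}-\uu A\in[0,1)^n$ and in noting that the entries of $A$ and $B$ themselves are computable in $O(n)$ via partial products.
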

\begin{proof}We prove that Algorithm \ref{alg:algorithm} is correct and can be performed with $O(n)$ operations. Let $A$ and $B$ be as in \ref{eq:tridiagonalMatrix} and \ref{eq:GramSchmidtMatrix}. Let $R$ be as in Proof 2 of Thm. \ref{thm:bricktocube}. \\[1\baselineskip]
\noindent \textit{Correctness of the algorithm:} To find the find the right piece of the dissection, one has to find the translation $\bm{w} \in \Lambda$ such that $\bm{x} = \bm{y} + \bm{w}$ with $\bm{y} \in \mathcal{R}$. We show that $\bm{w}=\bm{u}B$ is this translation, and therefore $\bm{x} - \bm{u} B$ belongs to $\mathcal{R}$, and corresponds to the image of $\bm{x}$ in the bijection determined by the dissection.

From steps $2$ and $3$ of the algorithm, $\bm{\overline{x}} = \bm{x} B^{-1} A$ and hence:
$$\bm{y}R^{-1} = \bm{x} R^{-1} - \bm{u} B R^{-1} = \bm{\overline{x}} - \bm{u} A.$$
From the definition of $\bm{u}$, we have that $\bm{\overline{x}} - \bm{u} A \in [0,1]^n$. Hence, $\bm{y} = \bm{\alpha} R$ for some $\bm{\alpha} \in [0,1]^n$, proving the statement.\\
\indent \textit{Complexity:} First observe that $A$ and $B$ are bidiagonal matrices with non-zero elements only on the main and second upper/lower diagonal. Then the linear system of equations in step $2$ and the vector-matrix multiplications in steps $3$ and $8$ can be computed with $O(n)$ operations. For steps $4$ and $5$ we need $n$ round operations and $n-1$ multiplications/additions. This gives a total of $O(n)$ arithmetic operations.
\end{proof}
\qed


\begin{remark} The output of Algorithm \ref{alg:algorithm} is a point $\yy$ in the realization $\mathcal{R}$ of the brick described in the proof of Thm. \ref{thm:bricktocube}. Recovering the coordinates $\alpha$ of the point in the ``canonical'' system (i.e., in the ``right'' brick $[0,a_1] \times \ldots [0,a_n]$) can be as well performed in linear time by exploiting the decomposition of matrix $R = A^{-1} B$. To do so, we must find $\alpha$ such that $\yy = \alpha R = \alpha A^{-1} B$. We first find $\ww$ such that $\yy = \ww B$ and then perform matrix multiplication $\ww A = \alpha$. Both the multiplication and solving the linear system can be performed in linear time, since $A$ and $B$ are bidiagonal.
\label{rmk:vanilla}
\end{remark}

\section{Application: Analog Mappings}
In \cite{campello2013projections}, maps between a $k$-dimensional uniform source on $[0,1]^k$, to be transmitted over an $n$-dimensional Gaussian channel are studied. The objective is to minimize the mean squared error (MSE) criterion. 

While the source support is originally a cube, the techniques in \cite[Sec. VI]{campello2013projections} show that it is possible to enhance the MSE significantly when transmitting points of a modified support (e.g., a brick). It is also shown that dissections of polytopes can be used to map the original support into the modified one without degrading the MSE. To this purpose, one needs to, given a point in the dissected cube, find the corresponding point in the brick. Thm. \ref{thm:linear} shows that it is possible to do this with $O(n)$ operations.

The dimensions of the brick in \cite{campello2013projections} are related to the signal-to-noise (SNR) ratio of channel. Other important parameters in the coding scheme \cite{campello2013projections} are a ``shrinking factor $(1-\varepsilon)$'' and translation vectors $\bm{t}_i$, used to prevent anomalous errors, which are not addressed here.

\section{Conclusion}
This paper shows a dissection from a cube into a brick in $\mathbb{R}^n$. The dissection is a generalization of a two-dimensional method by Montucla. As a by-product of the generalization, we show that computing the dissection can be performed with as few as $O(n)$ operations.

There are some rectangle-to-square dissections that require fewer pieces than Montucla's (see \cite[p.222, 237-243]{{frederickson2003dissections}}). We do not know whether these are also a manifestation of the Two Tile Theorem, neither if it is possible to generalize them to higher dimensions, which is left as an open question. 



\bibliography{curves}
\bibliographystyle{alpha}
\end{document}